\newtheorem{thm}{Theorem}[section]
\newtheorem{lem}[thm]{Lemma}
\newtheorem{prop}[thm]{Proposition}
\newtheorem{exmp}[thm]{Example}
\newtheorem{rmk}[thm]{Remark}
\newtheorem{thm-con}[thm]{Theorem-Conjecture}
\numberwithin{equation}{section}
\theoremstyle{definition}
\newtheorem{defn}[thm]{Definition}
\newcommand{\F}{\mathbb F}
\newcommand{\cB}{\mathcal B}
\begin{document}

\title[Boomerang uniformity of a class of power maps]{ Boomerang uniformity of a class of power maps}

\author[S. U. Hasan]{Sartaj Ul Hasan}
\address{Department of Mathematics, Indian Institute of Technology Jammu, Jammu 181221, India}
\email{sartaj.hasan@iitjammu.ac.in}

\author[M. Pal]{Mohit Pal}
\address{Department of Mathematics, Indian Institute of Technology Jammu, Jammu 181221, India}
\email{2018rma0021@iitjammu.ac.in}

\author[P.~St\u anic\u a]{Pantelimon~St\u anic\u a}
\address{Applied Mathematics Department, Naval Postgraduate School, Monterey, CA 93943, USA}
\email{pstanica@nps.edu}

\begin{abstract}
We consider the boomerang uniformity of an infinite class of (locally-APN) power maps and show that its boomerang uniformity over the finite field $\F_{2^n}$ is $2$ and $4$, when $n \equiv 0 \pmod 4$ and $n \equiv 2 \pmod 4$, respectively. As a consequence, we show that for this class of power maps, the differential uniformity is strictly greater than its boomerang uniformity.
\end{abstract}

\keywords{Finite fields, differential uniformity, boomerang uniformity, locally-APN}

\subjclass[2010]{12E20, 11T06, 94A60}

\maketitle

\section{Introduction} \label{S1}
Let $\F_q$ be the finite field with $q=2^n$ elements, where $n$ is a positive integer. We denote {the} multiplicative group of nonzero elements of $\F_q$ by $\F_q^*$. Let $f$ be a function from {the} finite field $\F_q$ to itself. It is well-known that any function from {the} finite field $\F_q$ to itself can be uniquely represented by a polynomial in $\F_q[x]$ of degree at most $q-1.$

Substitution boxes play a very crucial role in the design of secure cryptographic primitives, such as block ciphers. The differential attack, introduced by Biham and Shamir~\cite{BS91} is one of the most efficient attacks on block ciphers. To quantify the degree of security of a substitution box, used in a block cipher, against the differential attacks, Nyberg~\cite{KN93} introduced the notion of differential uniformity. For any $\epsilon \in \F_q$, the derivative of $f$ in the direction of $\epsilon$ is given by $D_{\epsilon}f(x) = f(x+\epsilon)+f(x)$ for all $x\in \F_q.$ For any $a,b \in \F_q$, the Difference Distribution Table (DDT) entry at the point $(a,b)$ of $f$  is given by
\begin{equation}\label{ddt}
 \Delta_f(a,b) = \lvert \{ x \in \F_q \mid D_{a}f(x)=b \} \rvert,
\end{equation}
and the differential uniformity  is $\Delta_f = \max \{\Delta_f(a,b) \mid a,b \in \F_q, a\neq0\}$. When $\Delta_f=1,2$, then the function $f$ is called perfect nonlinear (PN) function, almost perfect nonlinear (APN) function, respectively. It is easy to see that over finite fields of even characteristic, $\Delta_f$ is always even and hence APN functions give the optimal resistance against the differential attack. 

Wagner~\cite{DW99} introduced a new cryptanalysis method against block ciphers, which became known as  the boomerang attack. This attack may be thought of as an extension {of} the differential attack~\cite{BS91}. In order to analyze the  boomerang attack in a better way, and analogously to the Difference Distribution Table (DDT) concerning differential attack, Cid et al.~\cite{cid} introduced the notion of Boomerang Connectivity Table (BCT). Further, to quantify the resistance of a function against the boomerang attack, Boura and Canteaut~\cite{BC18} introduced the concept of boomerang uniformity, which is  the maximum value in the BCT excluding the first row and first column. For effectively computing the entries in the BCT, Li et al.~\cite{KLi19} proposed an equivalent formulation as described below. For any $a,b \in \F_q$, the Boomerang Connectivity Table (BCT) entry of the function $f$ at point $(a,b)$, denoted by $\cB_f(a,b)$, is the number of solutions in $\F_q \times \F_q$ of the following system of equations
\begin{equation}\label{bs}
 \begin{cases}
 f(x)+f(y)=b,\\
 f(x+a)+f(y+a)=b.
 \end{cases}
\end{equation}
The boomerang uniformity of the function $f$, denoted by $\cB_f$, is given by 
$$\cB_f = \max\{ \cB_f(a,b) \mid a,b \in\F_q^* \}.$$
For {any permutation $f$}, Cid et al.~\cite[Lemma 1]{cid} showed that $\cB_f(a,b) \geq \Delta_f(a,b)$ for all $(a,b)\in \F_q \times \F_q$.  Later, Mesnager et. al~\cite{SM20} showed that it holds for non-permutation functions as well. Cid et. al~\cite[Lemma 4]{cid} showed that for APN {permutations}, the BCT is the same as the DDT, except for the first row and the first column. Thus APN {permutations} offer an optimal resistance to both differential and boomerang attacks. However, over finite fields $\F_{2^n}$ with $n$ even, which is the most interesting case in cryptography, the only known example of APN {permutation} is due to Dillon~\cite{Dil10} over $\F_{2^6}$.  The existence of APN permutations over $\F_{2^n}$, $n\geq 8$ even, is an open problem and often referred to as  the Big APN Problem. Therefore, over $\F_{2^n}$, $n$ even, the functions with differential and boomerang uniformity four offer the best (known) resistance to differential and boomerang attacks. So far, there are six classes of {permutations} over $\F_{2^n}$, $n\equiv 2 \pmod 4$ with boomerang uniformity $4$ (see \cite{BC18, KLi21, KLi19, NLi21, SM20, NLi20}). 

In this paper, we consider the boomerang uniformity of an infinite class of locally-APN (see Definition~\ref{D1}) functions $f(x)=x^{2^m-1}$ over the finite field $\F_{2^n}$, where $n=2m$ with $m>1$. In Section~\ref{S2}, we recall some results concerning the differential uniformity of $f$. Section~\ref{S3} will be devoted to the boomerang uniformity of this power map and we shall show that the power map $f$ is boomerang $2$-uniform when $n \equiv 0 \pmod 4$ (i.e. when $m$ is even) and boomerang $4$-uniform when $n \equiv 2 \pmod 4$ (i.e. when $m$ is odd), respectively.

Cid et al.~\cite{cid} (see also~\cite[Theorem 1]{SM20}) showed that for permutation functions $f$, $\cB_f \geq \Delta_f$. However, perhaps, due to lack of any explicit example in the case of non-permutations, in several follow up papers of~\cite{cid} such as~\cite{Cal21, CV20, KLi21}, the term ``permutation" was not emphasized and it has been stated that for any function~$f$, the differential uniformity is less than the boomerang uniformity. In this paper, we shall show that for non-permutations, the differential uniformity is not necessarily smaller than the boomerang uniformity. To the best of our knowledge, this is the first such example. Finally, we end with some concluding remarks in Section~\ref{S4}. 

While one might wonder if investigating non-permutation is worthy, and we believe that these questions and their answers  may reveal results of interests that do have applications to cryptography. With one exception~\cite{Dil10}, all APN functions on even dimension are non-permutations. In fact, even the known example from~\cite{Dil10} is an APN permutation that is CCZ-equivalent to the known Kim (non-permutation) APN function. 
Moreover, it is known that the boomerang uniformity is not invariant under the CCZ or even extended affine equivalence, while the differential uniformity is invariant. There are many open questions asking whether by adding a linearized polynomial (or even monomial) to an APN non-permutation function might render a permutation (surely, APN) function.
If we know exactly how the boomerang uniformity behaves under such small perturbations, then we can possibly answer some of these questions. For example, a simple consequence of our main results is that there is no permutation in the CCZ-equivalent class of $x^{2^m-1}$ over $\F_{2^{2m}}$ that has boomerang uniformity smaller than $2^m-2$.

\section{Differential Uniformity of  $x\to x^{2^m-1}$}
\label{S2}

The differential properties of the power maps of the form $x^{2^t-1}$ over $\F_{2^n}$, $1 < t <n $, have been considered in~\cite{BCC11} where authors computed DDT entries $\Delta_f(1,b)$ by determining roots of linearized polynomials of the form $x^{2^t}+bx^2+(b+1)x=0$. In fact, in~\cite{BCC11} authors introduced a new type of functions, called locally-APN functions, defined as follows.

\begin{defn}\label{D1}
 Let $f$ be a power map from $\F_{2^n}$ to itself. Then the function $f$ is said to be locally-APN if
 \[
  \Delta_f(1,b) \leq 2,~\mbox{for all}~b \in \F_{2^n} \backslash \F_2.
 \]
\end{defn}
In~\cite{BCC11} authors gave an infinite class of locally-APN functions by showing that the power map $x^{2^m-1}$ over $\F_{2^{2m}}$ is locally-APN. 

The following lemma concerning the DDT entries of the power map $x^{2^m-1}$ over $\F_{2^{2m}}$ has already been proved in~\cite[Theorem 7]{BCC11}. However, we reproduce its proof here for the sake of convenience of the readers, as it will be used in computing the BCT entries in Section~\ref{S3}.

\begin{lem}\label{L2}
 Let $f(x)=x^{2^m-1}$ be a power map defined on the finite field $\F_{2^{2m}}$. Then  $\Delta_f(1,0)= 2^m-2$, $\Delta_f(1,b)\leq 2$ for all $b \in \F_{2^{2m}}\backslash \F_2$ and
 \begin{equation*}
  \Delta_f(1,1)=
  \begin{cases}
   2 &~\mbox{if m is even,}\\
   4 &~\mbox{if m is odd}.
  \end{cases}
 \end{equation*}

\end{lem}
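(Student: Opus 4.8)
The plan is to follow the approach of \cite{BCC11}: reduce the computation of $\Delta_f(1,b)$ to counting solutions of a linearized equation. Since $f(x) = x^{2^m-1}$, for $x \neq 0$ we have $x^{2^m-1} = x^{2^m}/x = x^{2^m} x^{-1}$, and it is convenient to split the analysis according to whether $x \in \{0,1\}$ (the ``degenerate'' inputs where $x$ or $x+1$ vanishes) or $x \in \F_{2^{2m}} \setminus \F_2$. For $x \notin \F_2$, both $x$ and $x+1$ are nonzero, so the equation $D_1 f(x) = f(x+1) + f(x) = b$ becomes $(x+1)^{2^m}(x+1)^{-1} + x^{2^m} x^{-1} = b$. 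Clearing denominators by multiplying through by $x(x+1) = x^2 + x$, and writing $y = x^{2^m}$, I would transform this into a polynomial identity. A cleaner route, which I expect the paper to take, is to set $z = x + x^{2^m}$ (the ``relative trace'' direction) or to directly manipulate into the form $x^{2^m} + b x^2 + (b+1) x = $ (constant), matching the linearized polynomials $L_b(x) = x^{2^m} + bx^2 + (b+1)x$ from \cite{BCC11}. The number of solutions $x \notin \F_2$ of $D_1f(x) = b$ is then governed by $\dim_{\F_2} \ker L_b$ together with a check of which kernel elements actually satisfy $x \notin \{0,1\}$ and reproduce the correct $b$.

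The three cases then come from analyzing $\ker L_b$:
\begin{itemize}
\item For $b = 0$: the equation $f(x+1) = f(x)$ for $x \notin \F_2$ reduces to $(x+1)^{2^m-1} = x^{2^m-1}$, equivalently $\left(\tfrac{x+1}{x}\right)^{2^m-1} = 1$, i.e.\ $\tfrac{x+1}{x} \in \F_{2^m}^*$ wait—more precisely $\left(1 + x^{-1}\right)^{2^m-1} = 1$ forces $1 + x^{-1} \in \F_{2}^{?}$; carefully, $u^{2^m-1}=1$ iff $u \in \F_{2^m}^*$ is false in general, rather $u^{2^m-1}=1$ iff $u^{2^m} = u$ iff $u \in \F_{2^m}$. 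Setting $u = 1 + x^{-1}$, this gives $x^{-1} \in \F_{2^m}$, so $x \in \F_{2^m}^*$, and excluding $x=1$ leaves $2^m - 2$ values; one checks $x=0$ is handled separately and contributes nothing, giving $\Delta_f(1,0) = 2^m - 2$.
\item For $b \notin \F_2$: the locally-APN property, already established in \cite[Theorem 7]{BCC11}, gives $\Delta_f(1,b) \le 2$ directly; I would simply invoke it, or re-derive it by showing $\ker L_b$ has dimension at most $1$ over $\F_2$ plus accounting for the excluded points.
\item For $b = 1$: here the linearized polynomial degenerates to $L_1(x) = x^{2^m} + x^2$, whose kernel is $\{x : x^{2^m} = x^2\} = \{x : x^{2^{m-1}} = x\} = \F_{2^{\gcd(m-1,?)}}$—more precisely $x^{2^m}=x^2$ iff $(x^{2^{m-1}})^2 = x^2$ iff $x^{2^{m-1}} = x$ iff $x \in \F_{2^{m-1}} \cap \F_{2^{2m}} = \F_{2^{\gcd(m-1,2m)}}$. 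Since $\gcd(m-1, 2m) = \gcd(m-1, 2m - 2(m-1)) = \gcd(m-1, 2)$, this intersection is $\F_4$ if $m$ is odd and $\F_2$ if $m$ is even. After removing the degenerate points $x \in \{0,1\}$ and double-counting via $x \leftrightarrow x+1$, this yields $\Delta_f(1,1) = 2$ for $m$ even and $\Delta_f(1,1) = 4$ for $m$ odd.
\end{itemize}

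The main obstacle I anticipate is the bookkeeping around the degenerate inputs $x = 0$ and $x = 1$: because $f(0) = 0$ and $f(1) = 1$, these points must be inserted by hand into the count rather than through the linearized-polynomial machinery, and one must verify for each target $b$ whether $D_1 f(0) = f(1) + f(0) = 1$ and $D_1 f(1) = f(0) + f(1) = 1$ contribute. In particular, for $b = 1$ both $x = 0$ and $x = 1$ are solutions, which is exactly the source of the ``extra'' $2$ that pushes $\Delta_f(1,1)$ from $2$ to $4$ in the odd case and keeps it at $2$ in the even case (where the $\F_4 \setminus \F_2$ contribution is absent). A secondary technical point is justifying the substitution $u = 1 + x^{-1}$ and the equivalence $u^{2^m-1} = 1 \iff u \in \F_{2^m}$, which is valid since $\gcd(2^m - 1, 2^{2m}-1) = 2^m - 1$, so the $(2^m-1)$-th powers in $\F_{2^{2m}}^*$ form exactly the subgroup of index $2^m - 1$, and $u^{2^m-1}=1$ characterizes the kernel of raising to that power, namely $\F_{2^m}^*$. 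Everything else is routine finite-field arithmetic over $\F_{2^{2m}}$.
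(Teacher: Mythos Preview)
Your proposal is correct and follows essentially the same route as the paper: separate off $x\in\{0,1\}$, reduce the remaining equation to $x^{2^m}+bx^2+(b+1)x=0$, and treat the three cases $b=0$, $b=1$, $b\notin\F_2$ via the same substitutions and $\gcd$ computations you outline. The only substantive difference is that for $b\notin\F_2$ the paper does not merely cite \cite{BCC11} but explicitly raises the linearised equation to the $2^m$-th power and combines the two to obtain a degree-$4$ polynomial in $x$ having $0$ and $1$ among its roots; this concrete quartic is then reused verbatim in the proofs of Theorems~\ref{T1} and~\ref{T2}, which is why the paper reproduces the argument rather than just citing it.
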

\begin{proof}
For any $b \in \F_{2^{2m}}$, consider the DDT entry at point $(1,b)$, which is given by the number of solutions in $\F_{2^{2m}}$ of the following equation
\begin{equation}\label{Deq}
 (x+1)^{2^m-1}+x^{2^m-1} = b.
\end{equation}
We shall now split the analysis to find  the number of solutions of the above equation in the following cases.

\textbf{Case 1.} Let $b=0$. It is easy to observe that $x=0,1$ are not  solutions of the above Equation~\eqref{Deq}. For $x \neq 0,1$, Equation~\eqref{Deq} reduces to 
 \begin{equation*}
 \left(\frac{x+1}{x}\right)^{2^m-1}= 1. 
 \end{equation*}
If we let $y = 1+x^{-1}$, then the above equation reduces to $y^{2^m-1}=1$. Since $\gcd(2^m-1, 2^{2m}-1) = 2^m-1$, this equation has exactly $2^m-2$ solutions in $\F_{2^{2m}} \backslash \F_2$ and hence $\Delta_f(1,0)= 2^m-2$.

\textbf{Case 2.} Let $b=1$. Notice that in this case $x=0$ and $x=1$ are solutions of Equation~\eqref{Deq}. For $x\neq 0,1$, Equation~\eqref{Deq} is equivalent to
\begin{equation*}
 \begin{split}
  \frac{x^{2^m}+1}{x+1}+\frac{x^{2^m}}{x} = 1
  \iff x^{2^m}+x^2 = 0.
 \end{split}
\end{equation*}
With $x^2 = y$, the above equation becomes
\begin{equation}\label{b1}
 y(y^{2^{m-1}-1}+1) = 0.
\end{equation}
Notice that when $m>1$ is odd then $\gcd(m-1, 2m) = 2$ and the above equation~\eqref{b1} can have at most $4$ solutions, namely $0,1,\omega, \omega^2$, where $\omega$ is a primitive cubic root of unity. Hence $\Delta_f(1,1)=4$. When $m>1$ is even then $\gcd(m-1, 2m) = 1$, thus $0,1$ are only solutions of the equation~\eqref{b1}. Hence in this case $\Delta_f(1,1)=2$.

\textbf{Case 3.} Let $b \in \F_{2^{2m}} \backslash \F_2$. It is easy to see that in this case $x=0$ and $x=1$ are not solutions of Equation~\eqref{Deq}. Therefore, the DDT entry at $(1,b)$ is the number of solutions in $\F_{2^{2m}} \backslash \F_2$ of the following equivalent equation
\begin{equation}
\label{due1}
 x^{2^m}+bx^2+(b+1)x=0.
\end{equation}
Now, raising the above equation to the power $2^m$, we have 
\begin{equation}
\label{due2}
 x^{2^{2m}}+b^{2^m}x^{2^{m+1}}+(b^{2^m}+1)x^{2^m}=0.
\end{equation}
Combining~\eqref{due1} and~\eqref{due2}, we have
\begin{equation*}
 \begin{split}
   b^{2^m+2}x^4+(b^{2^m+2}+b^{2^m+1}+b^{2^m}+b)x^2
   +(b^{2^m+1}+b^{2^m}+b)x=0.
 \end{split}
\end{equation*}
We note that the above equation can have at most $4$ solutions in $\F_{2^{2m}}$, two of which are $0$ and $1$ and thus it can have at most two solutions in $\F_{2^{2m}} \backslash \F_2$. Therefore for $b \in \F_{2^{2m}} \backslash \F_2$, $\Delta_f(1,b) \leq 2$. This completes the proof.
\end{proof}

\section{Boomerang uniformity of  $x\to x^{2^m-1}$}\label{S3}

In this section, we shall discuss the boomerang uniformity of the locally-APN functions given in the previous section. The boomerang uniformity of the power maps of the type $x^{2^t-1}$ over $\mathbb F_{2^n}$ has been considered in~\cite{ZH19}, where the authors give  bounds on the boomerang uniformity in terms of the differential uniformity under the condition $\gcd(n,t)=1$ and also show that  the power permutation $x^7$  has boomerang uniformity $10$ over $\F_{2^n}$, where $n\geq 8$ is even and $\gcd(3, n) = 1$.  The following theorem gives the boomerang uniformity of the power map~$f(x)=x^{2^m-1}$ over $\F_{2^{2m}}$ where $m>1$ is odd.  

\begin{thm}\label{T1}
Let $f(x)=x^{2^m-1}$, $m>1$ odd, be a power map from the finite field $\F_{2^{2m}}$ to itself. Then, the boomerang uniformity of $f$ is~$4$.
\end{thm}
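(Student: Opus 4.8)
The plan is to compute the BCT entry $\cB_f(a,b)$ directly from the system~\eqref{bs}. Since $f$ is a power map, the boomerang uniformity is governed by the entries with $a=1$; more precisely, because $f(\lambda x)=\lambda^{2^m-1}f(x)$ and $2^m-1$ is a unit mod $2^{2m}-1$ is \emph{not} true (here $\gcd(2^m-1,2^{2m}-1)=2^m-1$), one must be a little careful, but for any $a\in\F_{2^{2m}}^*$ the substitution $x\mapsto ax$, $y\mapsto ay$ turns the system for $(a,b)$ into the system for $(1, b/a^{2^m-1})$, so $\cB_f(a,b)=\cB_f(1, b a^{-(2^m-1)})$ and it suffices to bound $\cB_f(1,b)$ over all $b\in\F_{2^{2m}}^*$. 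First I would record this reduction.

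Next, fix $a=1$ and analyze
\[
\begin{cases}
 x^{2^m-1}+y^{2^m-1}=b,\\
 (x+1)^{2^m-1}+(y+1)^{2^m-1}=b.
\end{cases}
\]
Subtracting (adding, in characteristic $2$) the two equations gives $D_1 f(x)=D_1 f(y)$, i.e. $(x+1)^{2^m-1}+x^{2^m-1}=(y+1)^{2^m-1}+y^{2^m-1}$. So every boomerang solution $(x,y)$ has the property that $x$ and $y$ produce the \emph{same} derivative value $c:=D_1f(x)=D_1f(y)$, and then $x^{2^m-1}+y^{2^m-1}=b$ is a second constraint. The strategy is therefore: for each $c\in\F_{2^{2m}}$, let $S_c=\{x: D_1f(x)=c\}$, whose size is $\Delta_f(1,c)\in\{0,1,2,4,2^m-2\}$ by Lemma~\ref{L2} (after the same $x\mapsto $ scaling one uses $\Delta_f(1,\cdot)$, since $D_af(x)=b$ also reduces to the $a=1$ case). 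Then $\cB_f(1,b)=\sum_{c}\big|\{(x,y)\in S_c\times S_c : x^{2^m-1}+y^{2^m-1}=b\}\big|$. The bulk of the work is the case $c=0$, where $|S_0|=2^m-2$: here $D_1f(x)=0$ means $(x+1)^{2^m-1}=x^{2^m-1}$, equivalently (from Case~1 of Lemma~\ref{L2}) $1+x^{-1}$ is a $(2^m-1)$-th root of unity, i.e. $1+x^{-1}\in\F_{2^m}^*$, i.e. $x^{-1}\in\F_{2^m}$, i.e. $x\in\F_{2^m}^*$ with $x\neq 1$. On $S_0=\F_{2^m}\setminus\F_2$ we have $x^{2^m-1}=x^{2^m}/x=x/x=1$, so $x^{2^m-1}+y^{2^m-1}=1+1=0\neq b$ for $b\neq0$; hence the $c=0$ block contributes $0$ to $\cB_f(1,b)$ for every $b\in\F_{2^{2m}}^*$. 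This is the key observation that collapses the problem.

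For $c\neq 0$ we have $|S_c|\le 4$, with $|S_c|=4$ only possible when $c=1$ and $m$ odd (Lemma~\ref{L2}, Case~2), where $S_1=\{0,1,\omega,\omega^2\}$ with $\omega$ a primitive cube root of unity; for all other $c\neq0$, $|S_c|\le 2$. The contribution of a block $S_c$ is the number of ordered pairs from $S_c\times S_c$ with $x^{2^m-1}+y^{2^m-1}=b$, which is trivially at most $|S_c|^2$, but that crude bound ($16$) is too weak — I would instead note that for fixed $x\in S_c$ the equation $y^{2^m-1}=b+x^{2^m-1}$ has, since $u\mapsto u^{2^m-1}$ is $(2^m-1)$-to-$1$ onto $\F_{2^m}^*$ and $1$-to-$1$ at $0$, at most... actually cleaner: within $S_1=\{0,1,\omega,\omega^2\}$ compute $x^{2^m-1}$ explicitly — $0^{2^m-1}=0$, $1^{2^m-1}=1$, and $\omega,\omega^2\in\F_4\subseteq\F_{2^m}$ when $m$ is even but $\omega\notin\F_{2^m}$ when $m$ is odd, so $\omega^{2^m-1}=\omega^{2^m}\omega^{-1}=\omega^2\omega^{-1}=\omega$ and likewise $(\omega^2)^{2^m-1}=\omega^2$. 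Hence the four values $x^{2^m-1}$ for $x\in S_1$ are $0,1,\omega,\omega^2$ — all distinct — so for fixed $b$ the equation $x^{2^m-1}+y^{2^m-1}=b$ has at most $4$ ordered solutions in $S_1\times S_1$ (for each of the $4$ choices of $x$, at most one $y$), and this bound $4$ is attained for suitable $b$ (e.g. $b=1$: $(x,y)=(0,1),(1,0),(\omega,\omega^2)?$ — check $\omega+\omega^2=1$, yes, and $(\omega^2,\omega)$ — four pairs). Meanwhile every other block with $c\neq 0,1$ contributes at most... one must check these blocks cannot stack: two distinct nonzero derivative values $c,c'$ give disjoint $S_c,S_{c'}$, but pairs $(x,y)$ with $x\in S_c$, $y\in S_c$ contribute to the \emph{same} $b$ only if $x^{2^m-1}+y^{2^m-1}=b$; I would argue that summing over all $c\neq0,1$ the total number of pairs realizing a fixed $b$ is small — here is where I expect the main obstacle — and that it does not exceed $4$, matching the $c=1$ block and not adding to it because... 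Actually the honest structure is: $\cB_f(1,b)=\sum_{c\neq0}N_c(b)$ where $N_c(b)=|\{(x,y)\in S_c^2: x^{2^m-1}+y^{2^m-1}=b\}|$, and I must show this sum is $\le 4$ and $=4$ for some $b$. The hard part is controlling the sum over the many size-$\le2$ blocks; I would handle it by going back to the original system~\eqref{bs} with $a=1$ and eliminating: writing $u=x^{2^m-1},v=y^{2^m-1}$ and using $(x+1)^{2^m-1}=\tfrac{x^{2^m}+1}{x+1}$, the pair of equations becomes a system in $x,y$ that, after clearing denominators, is an algebraic system whose solution set I would bound by Bézout-type degree counting (each equation has bounded degree in $x$ and in $y$), excising the known spurious solutions ($x=y$, $x,y\in\{0,1\}$, etc.), to conclude $\cB_f(1,b)\le 4$; then exhibit an explicit $b$ with a $4$-element solution set (built from $S_1$ as above) to get equality. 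Combining with the reduction in the first paragraph gives $\cB_f=4$.
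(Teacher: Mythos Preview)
Your derivative-fiber decomposition $\cB_f(1,b)=\sum_c N_c(b)$, $N_c(b)=|\{(x,y)\in S_c^2:f(x)+f(y)=b\}|$, is correct and is a genuinely different, more conceptual route than the paper's direct algebraic elimination; your handling of the blocks $c=0$ and $c=1$ is also fine. The gap is exactly where you flag it, and your proposed B\'ezout-type patch cannot close it: after clearing denominators each of the two equations has degree of order $2^m$ in each variable, so a naive intersection bound is of order $2^{2m}$, not~$4$.

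The fix is far simpler than B\'ezout and you are one observation away from it. Every fiber $S_c$ is stable under $x\mapsto x+1$ (since $D_1f(x+1)=D_1f(x)$), so when $c\notin\{0,1\}$ and $|S_c|=2$ one has $S_c=\{x_0,x_0+1\}$; the only off-diagonal value of $f(x)+f(y)$ on $S_c^2$ is then $f(x_0)+f(x_0+1)=D_1f(x_0)=c$ itself. Hence $N_c(b)=2\cdot[c=b]$ for every such $c$, and therefore $\sum_{c\neq 0,1}N_c(b)=2\cdot[\,\Delta_f(1,b)=2\,]$ for all $b\notin\{0,1\}$. This already gives $\cB_f(1,b)\le 2$ for $b\notin\F_4$ and $\cB_f(1,1)=N_1(1)=4$. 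One small piece still remains: for $b\in\{\omega,\omega^2\}$ you now get $\cB_f(1,b)=N_1(b)+N_b(b)=4+2\cdot[\,\Delta_f(1,b)=2\,]$, so you must also verify $\Delta_f(1,\omega)=\Delta_f(1,\omega^2)=0$. This is not stated in Lemma~\ref{L2} (which only says $\le 2$) but follows from its proof: with $b=\omega$ and $b^{2^m}=\omega^2$ (as $m$ is odd), the quartic in~\eqref{due1}--\eqref{due2} collapses to $\omega x^2(x+1)^2=0$, leaving no roots outside $\F_2$. By contrast, the paper never isolates $\Delta_f(1,\omega)$; it absorbs these $b$'s into its Subcase~5.2 (the regime $(b+1)^{2^m+1}=1$) and shows, via the substitutions $bxy\in\F_{2^m}^*$ and $T=x^{2^m-1}$, that the system has no solutions with $x,y\notin\{0,1\}$ at all there --- a different mechanism reaching the same endpoint.
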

\begin{proof}
Recall that for any $b \in \F_q^*$, $q=2^{2m}$, the BCT entry $\cB_f(1,b)$ at point $(1,b)$ of  $f$, is given by the number of solutions in $\F_q \times \F_q$ of the following system of equations
 \begin{equation}\label{pbs}
  \begin{cases}
   x^{2^m-1}+y^{2^m-1}=b,\\
   (x+1)^{2^m-1}+(y+1)^{2^m-1}=b.
  \end{cases}
 \end{equation}
Notice that the above system~\eqref{pbs} cannot have solutions of the form $(x_1,y_1)$ with $x_1=y_1$ as $b \neq 0$. Also it is easy to observe that if $(x_1,y_1)$ is a solution of the above system~\eqref{pbs}, then so are $(y_1,x_1), (x_1+1,y_1+1)$ and $(y_1+1,x_1+1)$. We shall split the analysis of the solutions of the system~\eqref{pbs} in the following five cases.

\textbf{Case 1.} Let $x=0$. In this case, the system~\eqref{pbs} reduces to  
\begin{equation}\label{eqx0}
  \begin{cases}
   y^{2^m-1}=b,\\
   (y+1)^{2^m-1}+y^{2^m-1}=1.
  \end{cases}
 \end{equation}
From Lemma~\ref{L2}, we know that the second equation of the above system has four solutions, namely $y= 0,1,\omega$ and $\omega^2$. Also since $m$ is odd, we have $2^m-1 \equiv 1 \pmod 3$. Since $b\neq0$, $y=0$ cannot be a solution of the system~\eqref{eqx0} and $y= 1,\omega$ and $\omega^2$ will be a solution of the system~\eqref{eqx0} when $b=1,\omega$ and $\omega^2$, respectively. Equivalently, when $b=1,\omega, \omega^2$ then $(0,1), (0,\omega), (0,\omega^2)$ are solutions of the system~\eqref{pbs}, respectively. When $b \in \F_q \backslash \F_{2^2}$ then there is no solution of the system~\eqref{pbs} of the form $(0,y)$.

\textbf{Case 2.} Let $x=1$. In this case, the system~\eqref{pbs} reduces to  
\begin{equation}\label{eqx1}
  \begin{cases}
   y^{2^m-1}=b+1,\\
   (y+1)^{2^m-1}+y^{2^m-1}=1.
  \end{cases}
 \end{equation}
 Similar to the previous case, the second equation of the above system~\eqref{eqx1} has four solutions, namely $y= 0,1,\omega$ and $\omega^2$. Since $b\neq0$, $y=1$ cannot be a solution of~\eqref{eqx1} and $y= 0,\omega$ and $\omega^2$ will be a solution of~\eqref{eqx1}, when $b=1,\omega^2$ and $\omega$, respectively. Equivalently, when $b=1,\omega, \omega^2$ then $(1,0), (1,\omega^2), (1,\omega)$ are solutions of the system~\eqref{pbs}, respectively. When $b \in \F_q \backslash \F_{2^2}$ then there is no solution of the system~\eqref{pbs} of the form $(1,y)$.

\textbf{Case 3.} Let $y=0$. Since the system~\eqref{pbs} is symmetric in the variables $x$ and $y$, this case directly follows from Case 1. That is, when $b=1,\omega, \omega^2$ then $(1,0), (\omega,0), (\omega^2,0)$ are solutions of the system~\eqref{pbs}, respectively. When $b \in \F_q \backslash \F_{2^2}$ then there is no solution for~\eqref{pbs} of the form~$(x,0)$. 

\textbf{Case 4.} Let $y=1$. This case directly follows from Case 2. That is, when $b=1,\omega, \omega^2$ then $(0,1), (\omega^2,1), (\omega,1)$ are solutions of the system~\eqref{pbs}, respectively. When $b \in \F_q \backslash \F_{2^2}$ then there is no solution for~\eqref{pbs} of the form~$(x,1)$.

\textbf{Case 5.}  Let $x,y \neq 0,1$. In this case system~\eqref{pbs} reduces to 
 \begin{equation}
 \label{eq:xy1}
  \begin{cases}
   x^{2^m}y+xy^{2^m}=bxy,\\
   (x+y)^{2^m}+(b+1)(x+y)+b=0.
  \end{cases}
 \end{equation}
 Let $y=x+z$.  Then, the above system becomes 
 \begin{equation}\label{pbs3}
  \begin{cases}
   x^{2^m}z+xz^{2^m}=bx(x+z),\\
   z^{2^m}+(b+1)z+b=0.
  \end{cases}
 \end{equation}
 Now, raising the second equation of the above system to the power $2^m$, we have
 \begin{equation}
 \label{pbs4}
   (b^{2^m}+1)z^{2^m}+z+b^{2^m} =0.
 \end{equation}
 Combining the second equation of~\eqref{pbs3} and Equation~\eqref{pbs4}, we obtain
 \begin{equation}
 \label{pbs5}
    ((b+1)^{2^m+1}+1)(z+1) =0.
 \end{equation}
 Therefore, the system~\eqref{pbs3} reduces to
 \begin{equation}
 \label{pbs6}
  \begin{cases}
   x^{2^m}z+xz^{2^m}=bx(x+z),\\
   ((b+1)^{2^m+1}+1)(z+1) =0.
  \end{cases}
 \end{equation} 
Now, we shall consider following two cases
 
\textbf{Subcase 5.1.} Let $(b+1)^{2^m+1} \neq 1$. In this case, the first equation of~\eqref{pbs6} reduces to 
\[
  x^{2^m}+bx^2+(b+1)x=0,
\]
which is equivalent to 
\begin{equation}
\label{pbs7}
 b^{2^m+2}x^4+(b^{2^m+2}+b^{2^m+1}+b^{2^m}+b)x^2
 +(b^{2^m+1}+b^{2^m}+b)x=0.
\end{equation}
When $b=1$,  the above equation becomes $x^4+x =0$, which has four solutions $x=0,1,\omega, \omega^2$. Since we assumed $x,y \neq0$, the only solutions we consider are $x=\omega$ and $\omega^2$. Thus for $b=1$, $(\omega, \omega^2)$ and $(\omega^2, \omega)$ are solutions of the system~\eqref{pbs3}. When $b \in \F_q \backslash \F_2$ with $(b+1)^{2^m+1}\neq 1$, by Lemma~\ref{L2}, Equation~\eqref{pbs7} can have at most two solutions. 

\textbf{Subcase 5.2.} Let $(b+1)^{2^m+1} = 1$. It is more convenient, now, to work with~\eqref{eq:xy1}. We then raise the first equation of the system~\eqref{eq:xy1} to the $2^m$-th power obtaining
\[
x^{2^{2m}} y^{2^m}+y^{2^{2m}} x^{2^m}=b^{2^m} x^{2^m}y^{2^m},
\]
which is equivalent to 
\[
x y^{2^m}+y x^{2^m}=b^{2^m} x^{2^m}y^{2^m}.
\]
Combining this with the first equation of~\eqref{eq:xy1}, we infer that
\[
b^{2^m} x^{2^m}y^{2^m}=bxy,
\]
and so, $bxy=\alpha\in\F_{2^m}^*$.
Using $y=\frac{\alpha}{bx}$ in the first equation of~\eqref{eq:xy1}, we obtain 
\begin{equation}
\label{eq:x1}
x^{2^m-1}\frac{1}{b}+x^{1-2^m} \frac{1}{b^{2^m}}=1.
\end{equation}
Label $T=x^{2^m-1}$. Then the above equation reduces to 
 \begin{equation*}
 \begin{split}
  \frac{T}{b}+ \frac{T^{-1}}{b^{2^m}} &= 1\\
  \iff   \frac{T^2}{b}+ \frac{1}{b^{2^m}} &= T\\
  \iff   T^2b^{2^m}+b &= Tb^{2^m+1}.
  \end{split}
 \end{equation*}
Since, $(b+1)^{2^m+1} = 1$, by expansion, we get $b^{2^m+1}+b^{2^m}+b=0$, and so, $b^{2^m+1}=b^{2^m}+b$. The previous equation becomes 
\begin{equation*}
  T^2b^{2^m}+b = Tb^{2^m} +Tb\iff
  (Tb^{2^m}+b)(T+1) =0.
 \end{equation*}
 If $T=1$, then $x\in\F_{2^m}$ and so, $by\in\F_{2^m}$. Taking this back into~\eqref{eq:xy1}, we then obtain
 \begin{equation*}
  \begin{cases}
   y^{2^m}+(b+1)y=0,\\
   y^{2^m}+(b+1)y=(x+1)b,
  \end{cases}
 \end{equation*}
which is inconsistent with $x\neq 1$ and $b \in \F_q^*.$ If $Tb^{2^m}+b=0$, then we have 
\begin{equation*}
 \begin{split}
  Tb^{2^m}+b &=0 \\
  \iff Tb^{2^m-1}+1 &=0\\
  \iff (bx)^{2^m-1}&=1.
 \end{split}
\end{equation*}
Therefore $bx \in \F_{2^m}$ and hence $\frac{\alpha}{bx}= y \in \F_{2^m}$. Taking this back into~\eqref{eq:xy1}, we then obtain
\begin{equation*}
  \begin{cases}
   x^{2^m}+(b+1)x=0,\\
   x^{2^m}+(b+1)x=(y+1)b,
  \end{cases}
 \end{equation*}
 which is inconsistent with $y \neq 1$ and $b \in \F_q^*.$ This completes the proof.
\end{proof}

 \begin{exmp}
As an example, we checked by SageMath that the differential uniformity of the non-permutation power map $x^7$ over $\mathbb F_{2^6}$ is $6$, whereas its boomerang uniformity is $4$.
 \end{exmp}
 
The following theorem gives the boomerang uniformity of the power map $f(x)=x^{2^m-1}$ over $\F_{2^{2m}}$ , where  $m>1$ is even. 

\begin{thm}
\label{T2}
Let $f(x)=x^{2^m-1}$, $m>1$ even, be a power map from the finite field $\F_{2^{2m}}$ to itself. Then, the boomerang uniformity of $f$ is~$2$.
\end{thm}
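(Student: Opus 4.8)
The plan is to mirror the structure of the proof of Theorem~\ref{T1}, keeping track of how the parity of $m$ enters. As before, for $b\in\F_q^*$ with $q=2^{2m}$ I would count solutions $(x,y)$ of the system~\eqref{pbs}, noting that $x=y$ is impossible since $b\neq 0$ and that the solution set is invariant under $(x,y)\mapsto(y,x)$ and $(x,y)\mapsto(x+1,y+1)$. I would split into the same five cases: $x=0$, $x=1$, $y=0$, $y=1$, and $x,y\notin\{0,1\}$. The crucial change is that for $m$ even, Lemma~\ref{L2} gives $\Delta_f(1,1)=2$, so the auxiliary equation $(y+1)^{2^m-1}+y^{2^m-1}=1$ has only the two solutions $y=0,1$ (there is no primitive cube root $\omega$ available, since $\gcd(m-1,2m)=1$).

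In Cases~1--4 this immediately forces the boundary solutions to collapse: when $x=0$, the second equation of~\eqref{eqx0} requires $y\in\{0,1\}$, and $y=0$ is excluded by $b\neq0$, while $y=1$ gives $b=1$; so the only boundary solution with $x=0$ is $(0,1)$, occurring precisely when $b=1$. Symmetrically $x=1$ yields only $(1,0)$ at $b=1$, and $y=0$, $y=1$ yield $(1,0)$ and $(0,1)$ at $b=1$. Thus the boundary cases contribute at most the two solutions $(0,1),(1,0)$, and only when $b=1$.

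For Case~5, I would follow the same reduction: setting $y=x+z$, the system becomes~\eqref{pbs3}, and combining the second equation with its $2^m$-th power yields $((b+1)^{2^m+1}+1)(z+1)=0$ exactly as in~\eqref{pbs5} (this algebra does not depend on the parity of $m$). In Subcase~5.1, $(b+1)^{2^m+1}\neq1$ forces $z=1$, i.e.\ $y=x+1$, and the first equation reduces to $x^{2^m}+bx^2+(b+1)x=0$, equivalently~\eqref{pbs7}; by Lemma~\ref{L2} (Case~3, and the $b=1$ analysis, which now gives only $x\in\F_2$ since $x^4+x=x(x+1)(x^2+x+1)$ has no root in $\F_{2^{2m}}$ beyond $0,1$ when $m$ is even) this has at most two solutions in $\F_q\setminus\F_2$, hence at most two solutions $(x,x+1)$. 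In Subcase~5.2, $(b+1)^{2^m+1}=1$: raising the first equation of~\eqref{eq:xy1} to the $2^m$-th power and combining gives $bxy=\alpha\in\F_{2^m}^*$ and then, with $T=x^{2^m-1}$, the factorization $(Tb^{2^m}+b)(T+1)=0$ just as before; both branches $T=1$ (so $x\in\F_{2^m}$) and $Tb^{2^m}+b=0$ (so $bx\in\F_{2^m}$, hence $y\in\F_{2^m}$) lead, upon substitution into~\eqref{eq:xy1}, to a system inconsistent with $x\neq1$, $y\neq1$, $b\in\F_q^*$. So Case~5 contributes at most two solutions.

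Finally I would assemble the count: the boundary and interior contributions do not simultaneously exceed $2$. When $b=1$ the solutions are among $(0,1),(1,0)$ from Cases~1--4 and $(x,x+1)$ with $x^{2^m}+x^2+x=0$, $x\notin\F_2$ from Subcase~5.1 — but for $b=1$ the latter equation is $x^{2^m}+x^2+x=0$, i.e.\ after $x^2=y$, $y(y^{2^{m-1}-1}+1)=0$, which for $m$ even has only $y\in\{0,1\}$, so there are no interior solutions and $\cB_f(1,1)=2$. When $(b+1)^{2^m+1}=1$ with $b\neq1$, Cases~1--4 give nothing and Subcase~5.2 gives nothing, so $\cB_f(1,b)=0$. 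When $(b+1)^{2^m+1}\neq1$ and $b\neq1$, only Subcase~5.1 can contribute, at most two solutions. Hence $\cB_f(1,b)\leq2$ for all $b\in\F_q^*$, and since $\cB_f=\max_{a,b\in\F_q^*}\cB_f(a,b)$ and for power maps it suffices to take $a=1$ (by $\cB_f(a,b)=\cB_f(1,b/a^{2^m-1})$), and $\cB_f(1,1)=2$ is attained, we conclude $\cB_f=2$. The main obstacle I anticipate is making the $b=1$, $(b+1)^{2^m+1}\neq1$ accounting airtight — one must be sure that Subcase~5.1's two potential solutions for generic $b$ really do vanish at $b=1$ for even $m$ (so the $x=\omega,\omega^2$ solutions that appeared in the odd case are genuinely absent), and that no double-counting occurs between the boundary cases and the interior case; the rest is a faithful re-run of the Theorem~\ref{T1} computation with $\Delta_f(1,1)=2$ substituted for $4$.
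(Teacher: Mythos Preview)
Your proposal is correct and follows essentially the same approach as the paper's own proof: both reuse the case analysis of Theorem~\ref{T1}, observe that for even $m$ the boundary cases contribute only $(0,1),(1,0)$ at $b=1$, reduce the interior case via $((b+1)^{2^m+1}+1)(z+1)=0$, and invoke Lemma~\ref{L2} and the Subcase~5.2 argument verbatim. You are slightly more explicit than the paper in the final assembly and in justifying the reduction to $a=1$ for power maps; note only the small slip that for $b=1$ the interior equation is $x^{2^m}+x^2=0$ (not $x^{2^m}+x^2+x=0$), though your subsequent reduction to $y(y^{2^{m-1}-1}+1)=0$ and conclusion are correct.
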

\begin{proof} Following similar arguments as in the proof of Theorem~\ref{T1}, it is straightforward to see that when $b=1$, $(0,1)$ and $(1,0)$ are the only solutions of the system~\eqref{pbs} with either of the coordinates $x,y$ being $0$ or $1$. On the other hand, when $b \in \F_{2^{2m}} \backslash \F_{2}$, there is no solution of the system~\eqref{pbs} with either of the coordinates $x,y \in \{0,1\}$. 

We now consider the case when $x,y \neq 0,1$. In this case, the system~\eqref{pbs} reduces to 
 \begin{equation}
 \label{BE:xy1}
  \begin{cases}
   x^{2^m}y+xy^{2^m}=bxy,\\
   (x+y)^{2^m}+(1+b)(x+y)+b=0.
  \end{cases}
 \end{equation}
 Let $y=x+z$. Now, raising the second equation of the above system to the power $2^m$ and adding it to the second equation of the above system, we have  
 \begin{equation} \label{BE6}
  \begin{cases}
   x^{2^m}z+xz^{2^m}=bx(x+z),\\
   ((b+1)^{2^m+1}+1)(z+1) =0.
  \end{cases}
 \end{equation} 
Now, we shall consider the following two cases.
 
\textbf{Case 1.} Let $(b+1)^{2^m+1} \neq 1$. In this case, the system~\eqref{BE6} reduces to 
\[
  x^{2^m}+bx^2+(b+1)x=0,
\]
which is equivalent to 
\begin{equation}\label{BE7}
 b^{2^m+2}x^4+(b^{2^m+2}+b^{2^m+1}+b^{2^m}+b)x^2
 +(b^{2^m+1}+b^{2^m}+b)x=0.
\end{equation}
When $b=1$, the above equation becomes $x^4+x =0$, which has two solutions $x=0,1$, as $m$ is even. Since we assumed $x,y \neq0,1$, we do not get any solution of the system~\eqref{BE7} in this case. When $b \in \F_{2^{2m}} \backslash \F_2$ with $(b+1)^{2^m+1}\neq 1$, by Lemma~\ref{L2}, Equation~\eqref{BE7} can have at most two solutions. 

\textbf{Case 2.} Let $(b+1)^{2^m+1} = 1$, the argument is similar to Subcase 5.2 of Theorem~\ref{T1} and in this case the system~\eqref{pbs} will have no solution.
\end{proof}

 \begin{exmp}
The differential uniformity of the non-permutation power map $x^{15}$ over $\mathbb F_{2^8}$ is $14$, whereas its boomerang uniformity is $2$. 
 \end{exmp}

In the following we shall focus on APN functions. First, we  recall two lemmas which give a connection between the DDT and BCT entries of arbitrary permutation functions, respectively,  APN permutations.

\begin{lem}\cite[Lemma 1]{cid} \label{LL1}
 If $f$ is  a permutation function on $\F_q$, then $\Delta_f(a,b) \leq \cB_f(a,b)$, for all $(a,b)\in \F_q \times \F_q.$
\end{lem}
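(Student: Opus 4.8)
The plan is to exhibit an explicit injection from the set of solutions counted by $\Delta_f(a,b)$ into the set of solutions counted by $\cB_f(a,b)$, working with the formulation \eqref{bs} of the BCT. Since we are in characteristic $2$, if $x_0\in\F_q$ satisfies $D_af(x_0)=f(x_0+a)+f(x_0)=b$, I would take $y_0:=x_0+a$ and verify that $(x_0,y_0)$ solves \eqref{bs}: the first equation becomes $f(x_0)+f(x_0+a)=b$, which is precisely the DDT condition, while the second becomes $f(x_0+a)+f(x_0+2a)=f(x_0+a)+f(x_0)=b$, again by the DDT condition together with $2a=0$. The assignment $x_0\mapsto(x_0,x_0+a)$ is injective (the first coordinate already recovers $x_0$), so the number of solutions of \eqref{bs} is at least $\Delta_f(a,b)$, which is exactly the asserted inequality; the degenerate case $a=0$ is covered by the same map.

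I would then remark that this argument does not actually use that $f$ is a permutation, so it yields $\Delta_f(a,b)\le\cB_f(a,b)$ for every function $f$ on $\F_q$, recovering the extension of Mesnager et al.~\cite{SM20}. Alternatively, to stay closer to the original argument of Cid et al.~\cite{cid}, one could use the defining description of the BCT in terms of $f^{-1}$: given $x_0$ with $f(x_0+a)+f(x_0)=b$, the identities $f(x_0)+b=f(x_0+a)$ and $f(x_0+a)+b=f(x_0)$ give $f^{-1}(f(x_0)+b)+f^{-1}(f(x_0+a)+b)=(x_0+a)+x_0=a$, so $x_0$ is again counted by the BCT; here the permutation hypothesis is exactly what makes $f^{-1}$ well defined.

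There is essentially no obstacle in this proof: the only points deserving a line of care are the reduction $x_0+2a=x_0$ in characteristic $2$ (which is what makes the statement clean in this setting), and, should one prefer the $f^{-1}$-formulation, the equivalence of that formulation with \eqref{bs}, which is available from Li et al.~\cite{KLi19} and is used throughout the paper anyway.
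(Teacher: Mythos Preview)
The paper does not supply a proof of this lemma; it is simply quoted from \cite[Lemma~1]{cid}, so there is no in-paper argument to compare against. Your proof via the injection $x_0\mapsto(x_0,x_0+a)$ into the solution set of \eqref{bs} is correct, and your remark that the permutation hypothesis is not used in this formulation is precisely the extension the paper itself attributes to Mesnager et al.~\cite{SM20} in the paragraph following Lemma~\ref{LL2}. Your alternative sketch with $f^{-1}$ is also the route taken in the original source \cite{cid}, where the BCT was defined via the inverse permutation before the equivalent reformulation \eqref{bs} of \cite{KLi19} became standard.
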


\begin{lem}\cite[Lemma 4]{cid}\label{LL2}
 For any permutation with $2$-uniform DDT, the BCT entries equal the DDT entries, except for the first row and the first column.
\end{lem}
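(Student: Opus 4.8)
The plan is to work with the reformulation of the BCT in~\eqref{bs} due to Li et al.\ and to establish, for every $a,b\in\F_q^*$, the two inequalities $\Delta_f(a,b)\le\cB_f(a,b)$ and $\cB_f(a,b)\le\Delta_f(a,b)$; the first is precisely Lemma~\ref{LL1}, so the content lies in the second (in fact the argument will yield equality on the nose). Throughout, fix $a,b\in\F_q^*$ and abbreviate the derivative as $D_af(t)=f(t)+f(t+a)$, so that $\Delta_f(a,b)=\lvert\{t\in\F_q:D_af(t)=b\}\rvert$.

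First I would take an arbitrary solution $(x,y)$ of~\eqref{bs} and add its two equations; since the characteristic is $2$ this yields $D_af(x)=D_af(y)=:c$. The key structural point is that the fibres of the map $t\mapsto D_af(t)$ are unions of cosets of $\{0,a\}$, because $D_af(t)=D_af(t+a)$ for every $t$; hence any attained value has an even number of preimages, and the hypothesis $\Delta_f\le 2$ forces each attained value to have exactly two preimages, forming a single pair $\{t,t+a\}$. Applying this to $c$, we deduce that $y$ lies in the fibre $\{x,x+a\}$, i.e.\ $y=x$ or $y=x+a$.

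Next I would eliminate the two cases. If $y=x$, the first equation of~\eqref{bs} reads $b=f(x)+f(x)=0$, contradicting $b\in\F_q^*$; this is exactly where $b\neq 0$ (the excluded first column) is used. Hence $y=x+a$, and then \emph{both} equations of~\eqref{bs} collapse to the single condition $D_af(x)=b$. Conversely, for any $x$ with $D_af(x)=b$ the pair $(x,x+a)$ solves~\eqref{bs}. Therefore $x\mapsto(x,x+a)$ is a bijection from $\{x:D_af(x)=b\}$ onto the solution set of~\eqref{bs}, whence $\cB_f(a,b)=\Delta_f(a,b)$; combining this with Lemma~\ref{LL1} finishes the proof.

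I expect the only genuinely delicate step to be the coset description of the fibres of $D_af$ together with the resulting ``exactly two preimages'' claim: one must first note that $x$ and $x+a$ automatically share a derivative value — so a nonempty fibre can never have odd size — and only then invoke $\Delta_f\le 2$ to pin the fibre down to $\{x,x+a\}$; everything else is bookkeeping. It is also worth remarking that the permutation hypothesis enters here only in order to quote Lemma~\ref{LL1} for the easy inequality, while the harder inequality $\cB_f(a,b)\le\Delta_f(a,b)$ above holds for any function with $2$-uniform DDT, which is why later refinements could drop the word ``permutation'' from the statement.
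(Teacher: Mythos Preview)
The paper does not supply its own proof of Lemma~\ref{LL2}; the result is simply quoted from~\cite{cid}. Your argument is correct and self-contained: adding the two equations of~\eqref{bs} forces $D_af(x)=D_af(y)$, the APN hypothesis pins the fibre down to $\{x,x+a\}$, and $b\neq 0$ rules out $y=x$, giving the bijection $x\mapsto(x,x+a)$.

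It is worth noting that exactly this fibre-counting idea (substitute $y=x+\gamma$, observe that a nonempty APN derivative fibre is a single coset $\{x,x+\gamma\}$, and conclude $\gamma=a$) is what the paper itself deploys a few lines later to prove Theorem~\ref{P2}; so your approach is fully in keeping with the methods of the paper. Your final remark is also on point: because you work with the Li et al.\ formulation~\eqref{bs} and obtain the equality $\cB_f(a,b)=\Delta_f(a,b)$ directly, the permutation hypothesis is never invoked, and the appeal to Lemma~\ref{LL1} at the end is redundant.
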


From Lemma~\ref{LL1} and Lemma~\ref{LL2}, we can deduce (surely, known) that APN permutations have boomerang uniformity $2$. Of course, it is rather easy to see that the converse is also true.

\begin{prop}
 Let $f$ be a permutation on $\F_q$. If the boomerang uniformity of $f$ is $2$, then it is APN.
\end{prop}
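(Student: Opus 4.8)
The plan is to read the result off directly from Lemma~\ref{LL1}, together with the elementary differential properties of a permutation in characteristic two. First I would invoke Lemma~\ref{LL1}: since $f$ is a permutation, $\Delta_f(a,b) \leq \cB_f(a,b)$ for every pair $(a,b) \in \F_q \times \F_q$. In particular, for all $a,b \in \F_q^*$ we have $\Delta_f(a,b) \leq \cB_f(a,b) \leq \cB_f = 2$.

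Next I would dispose of the column $b=0$, which is not controlled by $\cB_f$ (the boomerang uniformity is the maximum over the BCT excluding its first row and first column). For $a \in \F_q^*$, a solution of $D_a f(x) = 0$ would satisfy $f(x+a) = f(x)$, and since $f$ is a permutation this forces $x+a = x$, i.e.\ $a = 0$, a contradiction. Hence $\Delta_f(a,0) = 0$ for every $a \in \F_q^*$. Combining the two observations gives $\Delta_f(a,b) \leq 2$ for all $a \in \F_q^*$ and all $b \in \F_q$, so $\Delta_f \leq 2$.

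Finally I would rule out $\Delta_f = 0$: for a fixed $a \in \F_q^*$ the count $\sum_{b \in \F_q} \Delta_f(a,b) = q$ forces some entry to be positive, and in characteristic two every $\Delta_f(a,b)$ is even, since $x$ solves $D_a f(x) = b$ if and only if $x+a$ does. Therefore $\Delta_f$ is a positive even integer that is at most $2$, whence $\Delta_f = 2$ and $f$ is APN.

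I do not anticipate a genuine obstacle; the only point worth flagging in the write-up is that the boomerang uniformity disregards the first row and first column of the BCT, so the case $b=0$ must be treated separately — but permutativity makes that case immediate. (One could alternatively argue via Lemma~\ref{LL2}, but for this converse direction the direct deduction from Lemma~\ref{LL1} is cleaner.)
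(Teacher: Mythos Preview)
Your proposal is correct and follows essentially the same argument as the paper: invoke Lemma~\ref{LL1} to bound $\Delta_f(a,b)\le \cB_f(a,b)\le 2$ for $a,b\in\F_q^*$, and handle the column $b=0$ separately using that $f$ is a permutation. Your final paragraph ruling out $\Delta_f=0$ is a slight extra bit of care the paper leaves implicit, but it does not change the approach.
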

\begin{proof}
 Since the boomerang uniformity of the permutation $f$ is $2$, we have
 \[
  \Delta_f(a,b) \leq \cB_f(a,b) \leq 2,
 \]
for all $(a,b)\in \F_q^* \times \F_q^*.$ Also notice that for any $a \in \F_q^*$,
\begin{equation*}
 \begin{split}
  \Delta_f(a,0) &= \lvert \{ x\in \F_q \mid f(x+a)+f(x)=0\} \rvert \\
  &=0.
 \end{split}
\end{equation*}
Thus $\Delta_f(a,b) \leq 2$ for all $(a,b) \in \F_q^* \times \F_q$ and hence $f$ is APN.
\end{proof}
By providing an extension of the boomerang uniformity to the case of arbitrary functions, Mesnager et. al~\cite{SM20} observed that for any arbitrary  function $f$, $\Delta_f(a,b) \leq \cB_f(a,b)$, for all $(a,b)\in \F_q \times \F_q.$ In the following, we shall show that the boomerang uniformity of any arbitrary APN function $f$ is $2$.

\begin{thm}\label{P2}
 Let $f$ be an arbitrary APN function over $\F_q$. Then the boomerang uniformity of $f$ is $2$.
\end{thm}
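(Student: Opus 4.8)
The plan is to prove the sharper statement that $\cB_f(a,b)=\Delta_f(a,b)$ for every $a,b\in\F_q^*$; since $f$ is APN this forces $\cB_f=\max_{a,b\in\F_q^*}\Delta_f(a,b)=2$. This is precisely the non-permutation analogue of Lemma~\ref{LL2}, and the argument is the abstract version of the case analysis carried out in the proofs of Theorems~\ref{T1} and~\ref{T2} in Section~\ref{S3}.

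Fix $a,b\in\F_q^*$ and let $(x,y)$ be a solution of the system~\eqref{bs}. Since $b\neq 0$ we must have $x\neq y$, and adding the two equations of~\eqref{bs} yields $D_af(x)=D_af(y)$; write $c$ for this common value. First I would dispose of the case $c=0$: then $f(x+a)=f(x)$, and applying $D_a$ once more (using $2a=0$) shows that $x+a$ also belongs to $N_a:=\{z\in\F_q:D_af(z)=0\}$. Because $f$ is APN, $|N_a|=\Delta_f(a,0)\le 2$, so $N_a=\{x,x+a\}$; then $y\in N_a$ and $y\neq x$ force $y=x+a$, whence $f(x)+f(y)=D_af(x)=0\neq b$, a contradiction. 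Hence $c\neq 0$.

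For $c\neq 0$ the same computation shows that $x$ and $x+a$ both lie in $S_c:=\{z\in\F_q:D_af(z)=c\}$, and $|S_c|=\Delta_f(a,c)\le 2$ again by the APN hypothesis, so $S_c=\{x,x+a\}$. Since $y\in S_c$ and $y\neq x$, we get $y=x+a$ and then $b=f(x)+f(x+a)=D_af(x)=c$. Conversely, whenever $D_af(x)=b$ the pair $(x,x+a)$ solves~\eqref{bs} (the second equation collapses to the first), and distinct choices of $x$ give distinct pairs. Thus $x\mapsto(x,x+a)$ is a bijection from $\{x:D_af(x)=b\}$ onto the solution set of~\eqref{bs}, so $\cB_f(a,b)=\Delta_f(a,b)\le 2$.

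Taking the maximum over $a,b\in\F_q^*$ gives $\cB_f\le 2$, and for the reverse inequality it suffices to exhibit one pair with $\Delta_f(a,b)=2$ and $b\neq 0$: fixing any $a\neq 0$, the row sum $\sum_{b\in\F_q}\Delta_f(a,b)=q$ with each entry $0$ or $2$ forces exactly $q/2$ values of $b$ to have $\Delta_f(a,b)=2$, and for $q\ge 4$ at least one of these is nonzero. Hence $\cB_f=2$. I expect the only real subtlety to be the elimination of the case $c=0$, which is exactly the place where dropping the permutation assumption of Lemma~\ref{LL2} could conceivably introduce extra solutions; once the APN bound $|N_a|\le 2$ is invoked this closes cleanly, and the rest is the bookkeeping already familiar from Section~\ref{S3}.
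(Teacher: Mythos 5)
Your proof is correct and follows essentially the same route as the paper's: both arguments reduce the BCT system to the observation that, by APN-ness, the solution set of a derivative equation is a single coset of $\{0,a\}$, so every solution has $y=x+a$ (the paper phrases this via the substitution $y=x+\gamma$ and the claim that only $\gamma=a$ contributes), giving $\cB_f(a,b)=\Delta_f(a,b)\le 2$. You are in fact somewhat more explicit than the paper, both in justifying the collapse to pairs $(x,x+a)$ (including the $c=0$ case) and in supplying the row-sum argument showing some entry actually equals $2$.
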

\begin{proof}
 Recall that the BCT entry $\cB_f(a,b)$ of $f$ at ponint $(a,b)$ is given by
 \begin{equation*}
  \cB_f(a,b) = \left \lvert \left\{ (x,y)\in \F_q \times \F_q \mid
  \begin{cases}
  f(x)+f(y)=b \\
  f(x+a)+f(y+a) =b
  \end{cases}
  \right \}  \right \rvert .
 \end{equation*}
Let $y=x+\gamma$, then the above equation becomes
\begin{equation*}
  \begin{split}
  \cB_f(a,b) &= \left \lvert \left \{ (x,\gamma)\in \F_q \times \F_q \mid 
  \begin{cases}
  f(x+\gamma)+f(x)=b \\
  f(x+a+\gamma)+f(x+a)=b
  \end{cases}
  \right \} \right \rvert \\
  &= \sum_{\gamma \in \F_q} \left \lvert \left \{ x\in \F_q \mid 
  \begin{cases}
  f(x+\gamma)+f(x)=b \\
  f(x+a+\gamma)+f(x+a)=b
  \end{cases}
  \right \} \right \rvert.
  \end{split}
 \end{equation*}
 Also observe that for any $(a,b)\in \F_q^* \times \F_q^*$, we have 
 \begin{equation*}
  \cB_f(a,b) = \sum_{\gamma \in \F_q^*} \left \lvert \left \{ x\in \F_q \mid 
  \begin{cases}
  f(x+\gamma)+f(x)=b \\
  f(x+a+\gamma)+f(x+a)=b
  \end{cases}
  \right \} \right \rvert.
 \end{equation*}
 Since $f$ is an APN function, therefore, for any $(a,b)\in \F_q^* \times \F_q^*$, the quantity under summation will contribute only if $a=\gamma$. Now for $\gamma=a$, $\cB_f(a,b)=\Delta_f(a,b) \leq 2$ and hence the boomerang uniformity of $f$ is $2$.
\end{proof}

\begin{rmk}
 The converse of the above Theorem~\textup{\ref{P2}} is not necessarily true and  counterexamples can be easily constructed.  For instance, the boomerang uniformity of the power map $x^{15}$ over $\F_{2^8}$ is $2$, though it is not an APN function. 
\end{rmk}

\section{concluding remarks}\label{S4}
In this note we compute the boomerang uniformity of the power map $x^{2^m-1}$ over $\F_{2^{2m}}$. As an immediate consequence, we find that the differential uniformity is not necessarily smaller than the boomerang uniformity (for non-permutations), as it was previously shown for permutations. The presented class is  not just an isolated example. Our computer programs reveal quickly some other like-functions, for instance, $x\mapsto x^{45}$ on $\F_{2^8}$, which is locally-APN, has differential uniformity~$14$, and boomerang uniformity~$2$. We could not extrapolate an infinite class out of all these examples, though.
 It would be interesting to construct some new (infinite) classes of functions for which the boomerang uniformity is strictly smaller than the differential uniformity. 

\section*{Acknowledgements}
We would like to express our sincere appreciation to the editors for handling our paper and to the reviewers for their careful reading, beneficial comments and constructive suggestions. 

The research of Sartaj Ul Hasan is partially supported by MATRICS grant MTR/2019/000744 from the Science and Engineering Research Board, Government of India.

\end{document}